\newtheorem{lemm}{Lemma}
\begin{document}
	\title{Delay Minimization  in Sliced Multi-Cell Mobile Edge Computing (MEC) Systems\\
{\footnotesize \textsuperscript{}}
\vspace{-10mm}
\thanks{ S. Zarandi and H.~Tabassum are with the Lassonde School of Engineering at York University, Canada (e-mail: shz@yorku.ca, hina@eecs.yorku.ca). This work is supported  by the Discovery Grant from the Natural Sciences and Engineering Research Council of Canada.}
}
\vspace{-10mm}
\author{
\IEEEauthorblockN{Sheyda Zarandi and Hina Tabassum, \textit{Senior Member, IEEE}} 
	\vspace{-10mm}
}
	\raggedbottom
	\maketitle
	\begin{abstract}
Here, we consider the problem of jointly optimizing users' offloading decisions, communication and computing resource allocation in a  sliced multi-cell mobile edge computing (MEC) network. We minimize the weighted sum of the gap between the observed delay at each slice and its corresponding delay requirement, where weights set the priority of each slice. Fractional form of the objective function, discrete subchannel allocation, considered partial offloading, and the interference incorporated in the rate function, make the considered problem a complex mixed integer non-linear programming problem. Thus, we decompose the original problem into two sub-problems: (i) offloading decision-making and (ii) joint computation resource, subchannel, and power allocation. We solve the first sub-problem optimally and for the second sub-problem, leveraging on novel tools from fractional programming and Augmented Lagrangian method, we propose an efficient algorithm whose computational complexity is proved to be polynomial. Using alternating optimization, we solve these two sub-problems iteratively until convergence is obtained. Simulation results demonstrate the convergence of our proposed algorithm and its effectiveness compared to existing schemes. 

	\end{abstract}
	\begin{IEEEkeywords}
	Network slicing, partial offloading, interference, MEC, resource allocation.
	\end{IEEEkeywords}
\vspace{-5mm}
	\section{Introduction}
	Network slicing is an indispensable technique to support heterogeneous services in fifth generation (5G)  networks \cite{b1}. Using network slicing,  multiple logical network slices can be created on a common physical infrastructure. Each slice can be tailored to a specific application with distinct Quality-of-Service (QoS) requirement. On another note, resource-intensive and latency sensitive services necessitate mobile edge computing (MEC) that brings computational resources to the Radio Access Network (RAN) edge. Thus, users would use both RAN and computation resources to offload and process their tasks at the MEC servers.  On the other hand, in a sliced network, resources are restricted for each slice based on a service level agreement (SLA) with infrastructure provider (InP). Subsequently, joint optimization of RAN resources (e.g., subchannel and power) and computation resources (e.g., CPU cycles of MEC servers) with optimal computation offloading in a sliced network becomes imperative.
% This isolation of resources and strict limitations set on resource utilization of each slice turns joint RAN and computation resource allocation (RA) imperative in addressing the distinct latency requirements of diverse B5G services.
	
Recently, the problem of delay minimization in a multi-cell MEC network was solved through communication and computation resource allocations (RAs) \textit{without network slicing} \cite{b4,b5, b14}.  However, in all these works, the interference was either ignored \cite{b4,b5} or simplified  \cite{b14}. Also, in \cite{b4}, offloading decisions were not optimized, \cite{b5} did not consider RAN RA, and \cite{b14} considered a binary offloading scheme. 
% In \cite{b16}, a joint RAN and computation RA was considered, however, again the offloading decision was binary, the objective function was related to traffic management, and the method proposed for dealing with interference could not be generalized to scenarios where successive interference cancellation is not employed.
% % 	In \cite{b12,b13,b15,b17,b18},  non-orthogonal multiple access (NOMA)-enabled MEC system was considered assuming a single ES. 
% In \cite{b4} and \cite{b5}, the computational offloading decision was binary and the interference was ignored.
% % 	\textcolor{red}{In \cite{b14,b16}, multiple MEC servers exist in the network.} 
% 	In \cite{b14}, RAN and computation resources were considered; 

	A handful of research studies considered RA in sliced cellular networks \cite{b1,b3, b9, b10, b2,last}. In \cite{b1}, the authors minimized a weighted combination of energy consumption and delay through subchannel and computation RA. This work considered  two slices on a single base station (BS) with no interference. In \cite{b3}, the authors  minimized delay through computation RA, considering multiple BSs, and in \cite{b9}, the authors maximized the offloaded workload that can be supported in a given time at each fog node through energy optimization and server allocation. However, in both \cite{b3} and \cite{b9}, the inter-cell interference was ignored and offloading decisions and RAN RA were not considered. The authors  in \cite{b10}  optimized the traffic allocation in a multi-tier sliced architecture, while preventing over-provisioning. However RAN and computation RA were considered abstractly, i.e., neither subchannel, power, and computation RA were considered, nor offloading decisions were optimized. Similarly, in \cite{b2}, an abstract view of 'resource' was adopted to minimize the weighted system delay, i.e.,  RAN and computation RA were not addressed.
% 	In contrast to \cite{b1}, in both \cite{b3} and \cite{b9} multiple cooperating ESs were considered. However, in \cite{b3}, offloading decisions were not taken into account, and RAN RA was limited to finding the ratio of available bandwidth for users. 

{Recently, using stochastic optimization, joint subchannel, power and computation RA was considered in a multi-cell sliced network to minimize system energy consumption in \cite{last}, while ignoring offloading decisions}. It should be noted that energy consumption can be modeled as a convex function of transmit power and subchannel allocation variables, and is  different from delay, which at its simplest form, is a function of inverse of non-convex data rate. Also, when all users offload, as in \cite{last}, the delay can be easily restated in terms of the users' data rate. However, with offloading decision optimization, such simplifications are not applicable.
	
% 	Moreover, overlooking offloading decision further simplifies \cite{last} compared to our work. This is due to the fact that when all users supposedly offload, the delay function can easily be restated in term of the data rate. This simplification cannot be employed when offloading decision is taken into account and data rate is an inconsequential factor for many local users.    

% 	Based on this review it can be concluded that RAN resource allocation is completely overlooked in existing literature on slicing. On the other hand works that address this issue do not consider slicing.
% Compared to the existing literature, our contributions can be summarized as:
% \begin{center}
%  \begin{tabular}{||c c c c c ||} 
%  \hline
%  & \cite{b4, b14, b16} & \cite{b5} & \cite{b1,b3, b9, b10, b2} & \cite{b3, b9}  \\ [0.1ex] 
%  \hline\hline
%  Network Slicing & - & - & 787 \\ 
%  \hline
% Interference & - & NOMA & 5415 \\
%  \hline
%   Joint RA & Limited & Limited & 7507 \\
%  \hline
%  Server Cooperation & - & \checkmark & 7507 \\
%  \hline
%  Partial Offloading & - & - & - \\
%   \hline
%  Partial Offloading & - & - & - \\
%  \hline
% \end{tabular}
% \end{center}

{To our best knowledge, the problem of \textit{ delay minimization with joint offloading, computation, and communication RA in a  cooperative multi-cell MEC network with or without slicing} is not investigated in the literature}. Our contributions are:

$\bullet$  We jointly optimize users' offloading decisions, {RAN and computing RA}  in a multi-cell MEC network to  minimize the weighted sum of the difference between the delay observed at each slice and its corresponding desired delay. {The fractional form of the objective function, discrete subchannel allocation, the partial offloading scheme, and the interference incorporated in the rate function, turns this problem into a mixed integer non-linear programming problem (MINLP) for which we proposed an efficient and novel algorithm}.

$\bullet$  We decouple the original problem into two sub-problems: (i)~offloading decision-making and (ii)~joint computation resource, subchannel, and {power allocation}. We solve the first sub-problem optimally. For the second sub-problem, we propose an efficient algorithm with polynomial computational complexity, leveraging on  tools from fractional programming and  Augmented Lagrangian method (ALM). Using alternating optimization, we solve these two sub-problems iteratively until convergence. Complexity analysis is also presented. 

$\bullet$ {Simulation results demonstrate the efficacy of our proposed algorithm compared to existing schemes and provide insights related to the impact of interference, slice prioritization, and cooperative MEC offloading, while demonstrating the convergence in a few iterations.}

% $\bullet$ We investigate the impact of \textit{inter-cell interference} on the delay of slices. To the best of our knowledge, the effect of this factor on this objective function  has not been addressed in existing literature on multi-server MEC networks with offlaoding decision optimization \cite{b4,b5, b14,b16,b1,b3, b9, b10, b2,last}. Note that the algorithms proposed in NOMA-MEC research works, such as \cite{b16}, are specific to NOMA systems and cannot be generalized to scenarios where successive interference cancellation is not employed.

% $\bullet$ We Consider \textit{ES cooperation} while adopting a \textit{partial offloading} scheme. This is unlike existing works in which either server cooperation is ignored \cite{b1,b4,b10,b2,last} or a binary scheme is utilized to address it \cite{b3,b9,b5}.  
\vspace{-1.5mm}
% To further clarify the contributions of our work compared to existing work on network slicing, Table 1 has been provided.
%  \begin{center}
% 	\begin{table}
% 		\centering
% 		\caption{Comparison with existing works on MEC network slicing}
% 		\renewcommand{\arraystretch}{1.35}
% 		\begin{tabular}{|p{1.8cm} |p{1.8cm}|p{1.8cm}|}
% 			\hline
% 			Col1 & Col2 & Col3 \\[2.25pt] \hline\hline
% 			$\sigma^2$ & -120 dBm\\ [2.35pt] \hline
% 			$p_{\max}$ & 42 dBm\\ [2.35pt] \hline
% 			$p_{\max}^u$ & 23 dBm\\ [2.35pt]\hline
% 			$R_\mathrm{min}^\mathrm{dl}$  & 10 bps/Hz\\ [2.35pt]\hline
% 			$R_\mathrm{min}^\mathrm{ul}$  & 5 bps/Hz\\ [2.35pt]\hline
% 			$P_\mathrm{c}^\mathrm{u}$  & 0.1 W\\ [2.35pt]\hline
% 			$P_\mathrm{c}^\mathrm{MBS}$  & 1 W\\ [2.35pt]\hline
% 			$\kappa$  & 38\%\\ [2.35pt]\hline
% 			$\psi$  & 20\% \\ [2.35pt]\hline
% 			$\lambda$  & $10^6$ \\ [2.35pt]\hline
% 			Path loss exponent  &3 \\ [2.35pt]\hline
% 		\end{tabular}
% 	\end{table}
% \end{center}
% % \begin{center}
% %  \begin{tabular}
% % \caption{Your caption.}
% %  \hline
% %  & \cite{b3, b9} &  \cite{b1, b10, b2} \\ [0.1ex] 
% %  \hline\hline
% % Interference & - & -  \\
% %  \hline
% %   Joint RA & - & -  \\
% %  \hline
% %  Server Cooperation &  \checkmark & - \\
% %  \hline
% %  Partial Offloading & - & -  \\
% %   \hline
% %  \hline
% % \end{tabular}
% % \end{center}
\begin{figure}
    \centering
    \includegraphics[width=70mm,height=50mm]{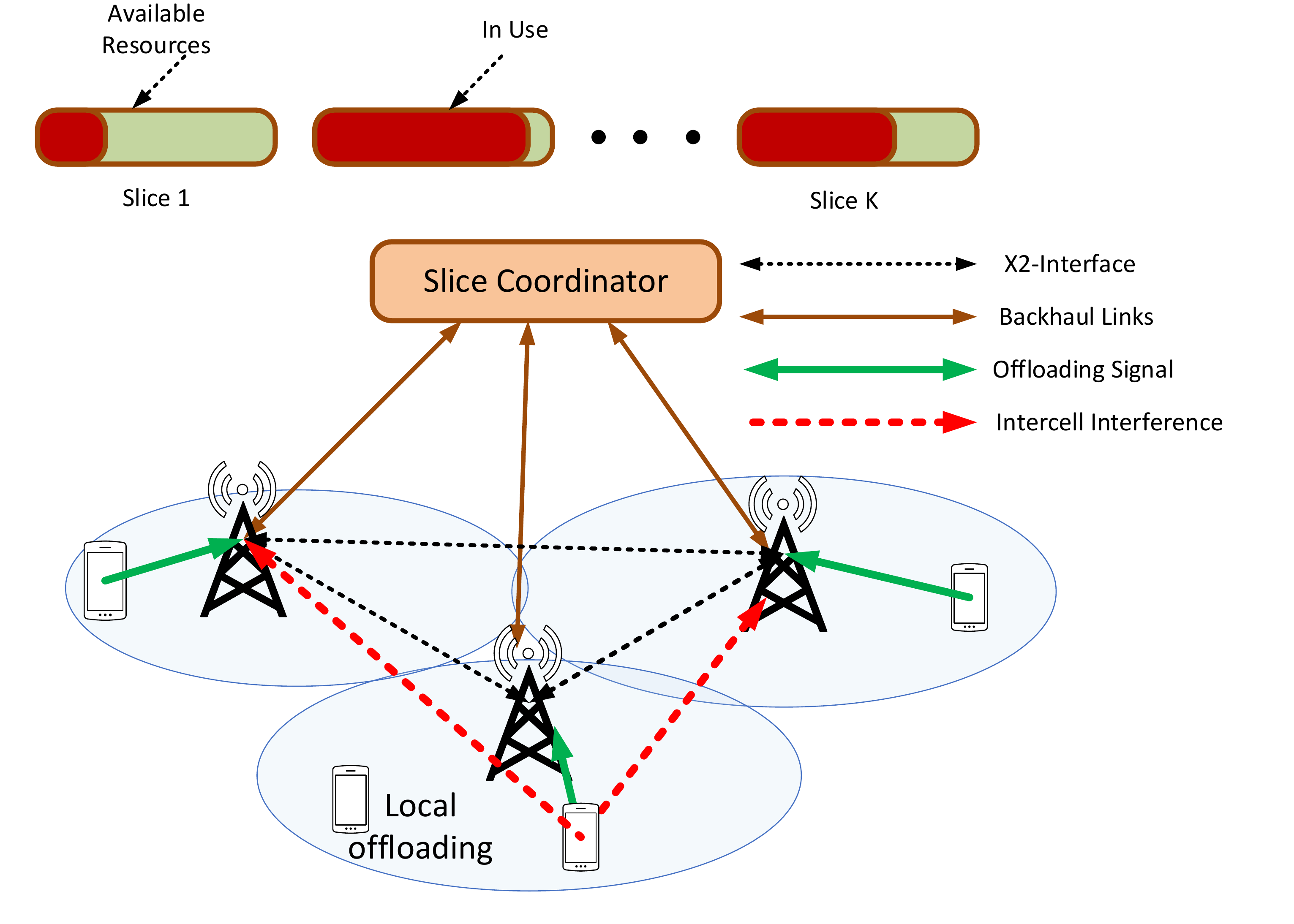}
    \caption{System Model}
    \label{fig:system_model}
    \vspace{-5mm}
\end{figure}
\vspace{-1mm}
\section{System Model and Assumptions}
We consider a MEC network with $M$ edge points (or BSs) with co-located servers\footnote{The edge nodes can connect to each other using any type of topology such as full-mesh or star topology.}. The set of MEC servers is denoted as $
\mathcal{M} = \{1, 2, \cdots, M\}$.  The available spectrum at each cell is divided into $N$ subchannels each with bandwidth $B$.{ Network resources are sliced  to accommodate $\mathcal{K}= \{1,2,\cdots, K\}$ tenants each of which provide one specific type of service}.  Furthermore, the set of users for each tenant $k$ is denoted by $\mathcal{U}_k$ and the set of all users is $\mathcal{U}=\{1,2,\cdots,U\}$. {Each tenant $k$ has a SLA with InP in which the proportion of computation capacity, $\beta_k^E$, and available bandwidth, $\alpha_k$, reserved for its users is determined. The task of each user $u$ is represented by the tuple ($L_u$,$C_u$), with $L_u$ as the size of the task and $C_u$ as the computational demand (CPU cycles) to process each bit.

% Upon receiving a task from a device, each MEC server can either perform the task itself or forward it to other servers with adequate available resources. 	 
To facilitate slice resource management, we consider a software-defined network (SDN) controller referred to as  slice coordinator (SC). 
% When a server receives a task for which it does not have enough resources, it would immediately send a request to SC, stating the requirements of the received task. 
The SC  keeps track of resource utilization in each slice and ensures that service providers (SPs) follow resource constraints in SLA and do not exceed their share of resources. This network architecture is given in Fig. \ref{fig:system_model}. 
% with enough computation capability to service the task in minimum time span while conforming to the resource utilization limit specified in SLA of SPs. 

We denote $y_{u,j}$ as the proportion of the task of user $u$ executed on the MEC server $j$. Thus, we have $ \sum_{j \in \{\mathcal{M}\cup 0\}} y_{u,j} = 1,~\forall u \in \mathcal{U}$, where index $0$ denotes local computation.

\subsubsection{Communication Model}
We consider that if a user offloads its task, it first sends it to its assigned server denoted by $m_u$, and then the remaining communication (possible hand-offs between servers) would be done over the high speed backhaul links. Denoting $\hat{\mathcal{U}}_j$ as the set of users associated to server $j$, the data rate of each user $u$ over subchannel $n$ is:
\begin{flalign}
r_{u,n}\hspace{-1mm}=B\log \left(1+\frac{x_{u,n}~ p_{u,n} h_{u,m_u,n}}{\sigma^2 + I_{u,n}}
\right)
\end{flalign}
where $p_{u,n}$, $I_{u,n}$, and $\sigma^2$  represent the transmit power of user $u$ over subchannel $n$, its inter-cell interference calculated as $I_{u,n}=\sum_{j\in \{\mathcal{M} \backslash m_u\}} \sum_{u'\in \hat{\mathcal{U}}_j}  x_{u',n} p_{u',n} h_{u',m_u,n} $, and receiver noise power, respectively. Also, $h_{u,j,n}$ is the path-gain between user $u$ and BS $j$ over subchannel $n$, and $x_{u,n}$ denotes the binary subchannel allocation variable which is equal to one if subcarrier $n$ is assigned to user $u$, and zero otherwise.

% \begin{flalign}\label{inter} 
% \end{flalign}
Now, we can calculate the total data rate of each user $u$ as $
R_{u}(\mathbf{X},\mathbf{P}) = \sum_{n \in \mathcal{N}} r_{u,n}$,
where $\mathcal{N}$, $\mathbf{X}$, $\mathbf{P}$, denote the set of $N$ subchannels, subchannel allocation matrix, and transmit power allocation matrix, respectively.
Denoting $\mathbf{Y}$ as the matrix of offloading decisions, the communication delay of user $u$ is:
\begin{flalign}
T^{\mathrm{comm}}_{u,m_u}(\mathbf{X},\mathbf{P},\mathbf{Y}) = \frac{\sum_{j \in \mathcal{M}}y_{u,j}L_u}{R_u(\mathbf{X},\mathbf{P})}.
\vspace{-2mm}
\end{flalign}
\subsubsection{Computing Model}
As a partial offloading scheme is adopted here, users' task may be partly processed locally. Denoting the computation capability of local device  for user $u$ as $f_u^L$ {(CPU cycles per second)}, the local computation delay would be:
\begin{flalign}
       T_u^\mathrm{L}(\mathbf{Y})= \frac{y_{u,0} L_u C_u}{f_u^L}.
\end{flalign}
{With $\mathbf{F}$ representing the matrix of all computation resource allocation variables}, since the task of user $u$ might be processed by servers other than its assigned server, the computation delay of user $u$ is:
\begin{flalign}\label{comp_delay}
T_{u}^E (\mathbf{X},&\mathbf{P},\mathbf{F},\mathbf{Y})= T_{u,m_u}^{\mathrm{comm}}(\mathbf{X},\mathbf{P},\mathbf{Y})\nonumber\\&+ \sum_{j \in \{\mathcal{M}\backslash m_u\}}y_{u,j}T^{\mathrm{ho}}_{m_u,j}
+ T^{\mathrm{comp}}_{u}(\mathbf{F},\mathbf{Y}).
\end{flalign}
where $T^{\mathrm{ho}}_{m_u,j}$ denotes the hand-off delay, including the time for communicating with SC and the average round trip time for task transfer between $m_u$ and $j^{\mathrm{th}}$ server. Moreover, $T^{\mathrm{comp}}_{u}$ denotes the offloading computation delay of user $u$. {If tasks' fragments are processed sequentially (one after the other), $T^{\mathrm{comp}}_{u}$ would be the summation of delays of user $u$ in each server $j$ as in (\ref{t_comp}). In case of parallel processing, the computation delay of user would be equal to the delay in the slowest server. However, in order to retain a tractable form for our objective function,  we consider an upper-bound and calculate the computation delay in both cases as follows:}
\begin{equation}\label{t_comp}
T^{\mathrm{comp}}_{u}(\mathbf{F},\mathbf{Y}) = \sum_{j \in \mathcal{M}} \frac{y_{u,j} L_u C_u}{f_{u,j}},
\vspace{-3mm}
\end{equation} 
where $f_{u,j}$ represents the computation resource that is allocated to user $u$ in server $j$ {(CPU cycles per second)}. {Note that even when parallel computation of the tasks is possible , due to 1) positivity of computation delay and 2) the independence between $f_{u,j}$ for {different servers}, this upper bound would not significantly effect the optimized value of computation resource allocation in the slowest server, as minimizing the sum translates into minimizing each component separately.} Due to the typically small size of response, we ignore the downlink transmission delay. Thus, the total delay of each user $u$ is:\vspace{-2mm}
\begin{flalign}
&T_u(\mathbf{X},\mathbf{P},\mathbf{F},\mathbf{Y}) =
T_u^L(\mathbf{Y})+ T^E_{u}(\mathbf{X},\mathbf{P},\mathbf{F},\mathbf{Y}).
\end{flalign}
%where, $T_c$, is the time required for combining the result obtained from MEC server with that of local computation. We should note that if local computation is faster than offloading, then after finishing the local processing, user must wait until the result of edge processing is returned from the edge server. Also, if the response from edge computation is received before local computation is over, the total delay a user should spend on a task would be equal to the delay of local computation. Regarding the aforementioned scenarios it is clear that taking overlap of  edge and local processing time is vital in accurately estimating the task completion time. This fact is why instead of summation of local and offloading delays, we use a maximum operator on them. 
 
%Each slice in the network has a maximum delay tolerance that we denote by $\Bar{T_k}~ \forall k \in \mathcal{K}$. Although the goal is to satisfy all these requirements, in reality in many occasions due to conjestion and high traffic load it becomes impossible not to violate these QoS requirements. With regard to this fact, we consider that we have some high priority services that are more restrict in their QoS requirement whereas other types of services can tolerate deviation from their maximum delay tolerance threshold. In this regard, we divide the set of slices into two subsets: the set of slices with high priority services, $\mathcal{K}^{Pr}$ and the set of slices with normal services $\mathcal{K}^{No}$, where $\mathcal{K} = \mathcal{K}^{Pr} \cup \mathcal{K}^{No}$.
\vspace{-5mm}
\section{Problem Formulation}
In this section, we formulate the problem of minimizing the weighted sum of the difference between the delay observed at a given slice and its corresponding delay requirement (\textit{or weighted sum of the delay deviation at each slice}), through jointly optimizing users' offloading decisions, {RAN and computing RA}  in a cooperative multi-cell MEC network. 
This problem offers SPs a valuable insight into the adequacy of their leased resources to meet the service quality requirement of their subscribers and the average delay they would experience under the existing SLA. Analysing the results obtained, SPs can better plan their future strategies to whether maintain their current SLA, invest more on leasing resources, or to modify their subscription policy to either increase or decrease the number of users they accept.
Now, we formally state the optimization problem as follows:
\vspace{-7mm}
\begin{center}
	\begin{equation}\label{first}
	\begin{aligned}
	&\textbf{P}:\underset{\mathbf{X},\mathbf{P},\mathbf{F},\mathbf{Y}}{\text{min}}~ \sum_{k \in \mathcal{K}} \sum_{u\in \mathcal{U}_k} \lambda_k (T_u(\mathbf{X},\mathbf{P},\mathbf{F},\mathbf{Y})-\Bar{T_k}) \\
	& \text{Subject to:}\\
	& C_1:\sum_{u\in \hat{\mathcal{U}_j}} x_{u,n} \leq 1, \qquad\qquad \forall n\in \mathcal{N}, \forall j \in \mathcal{M},\\
	& C_2:x_{u,n} \in \{1,0\}, \qquad  \forall n\in \mathcal{N}, \forall j \in \mathcal{M},\forall u \in \hat{\mathcal{U}_j},\\
	& {C_3: 0 \leq \sum_{n \in \mathcal{N}} x_{u,n} p_{u,n} \leq P_{\mathrm{max},u} , \qquad \forall{u}\in \mathcal{U},}\\
 	& C_4:  y_{u,0} L_u C_u \leq F^L_{u}, \qquad\qquad \forall u\in \mathcal{U},\\
 	& C_5: \sum_{u \in \mathcal{U}} y_{u,j}L_u C_u \leq F^E_{j}, \qquad \forall j\in \mathcal{M},\\
& C_6: \sum_{u\in \mathcal{U}_k} \sum_{n \in \mathcal{N}} x_{u,n} \leq  \alpha_k M N , \quad  \forall{k} \in \mathcal{K},\\
& C_7: \sum_{u \in \mathcal{U}_k} \sum_{j \in \mathcal{M}} f_{u,j}\leq \beta_k^E S^E, \qquad \forall k \in \mathcal{K},\\
& C_8: \sum_{j \in \{\mathcal{M}\cup 0\}} y_{u,j}  = 1, \qquad\qquad \forall u \in \mathcal{U},\\ 
& C_9:y_{u,j} \in [0,1], \quad \qquad \qquad \forall u \in \mathcal{U}, \forall j \in \mathcal{M}.
	\end{aligned}
	\end{equation}
\end{center}
In the above optimization problem, $\Bar{T_k}$ denotes the desired delay threshold of each slice $k$ and {$\lambda_k$ is the weighting factor whose value is defined in SLA} and handles the precedence of slices over each other. Furthermore, constraint $C_1$ indicates that each subchannel can be allocated to at most one user in each cell and $C_2$ shows the binary nature of the subchannel allocation variable. In constraint $C_3$, users' transmit power {is restricted between zero and a maximum threshold denoted by $P_{\mathrm{max},u}$}. In constraints $C_4$ and $C_5$, the limitation of local and edge computation resources are specified for each user and server, respectively, with $F_u^L$ and $F_j^E$ denoting the total computation capacity of user $u$ and server $j$ {(both in CPU cycles per second)}, in that order. Constraints $C_6$ and $C_7$ ensure that resource consumption at each slice follows SLA. That is,  $C_6$ limits the spectrum usage for each slice. Since there are $M$ cells in the system and each cell has access to $N$ subchannels, then in total we have $NM$ subchannels, from which only $\alpha_k$ percent can be used by users of slice $k$. Similar to communication resources, the proportion of the total computation capacity {$S^E$ ($S^E = \sum_{j\in \mathcal{M}}F^E_j$)} that is allocated to each slice $k$ is limited to $\beta_k^E$ as given in constraint $C_7$. Constraints $C_8$ and $C_9$ clarify the partial offloading decision scheme adopted in this work.

% {\em Remark 1:} For $\bar{T}_k=0$, the problem becomes a weighted delay minimization problem.
% In (\ref{first}), the first two constraints relate to subchannel allocation. 

As the result of interference included in the rate function, the binary subchannel allocation variables, and the objective function which is in the form of summation of ratios, optimization problem (\ref{first}) is MINLP and thus difficult to tackle. In the what follows we present our resource allocation algorithm.
\vspace{-2mm}
\section{Proposed Resource Allocation Framework}
To tackle the difficulties of solving problem (\ref{first}), we first take advantage of the problem structure and decompose it into the following two subproblems:
\vspace{-6mm}
\begin{center}
	\begin{equation}\label{p1}
	\begin{aligned}
	\textbf{P1}:&\underset{\mathbf{Y}}{\text{min}}~ \sum_{k \in \mathcal{K}} \sum_{u\in \mathcal{U}_k} \lambda_k (T_u(\mathbf{Y})-\Bar{T_k}) \\
	& \text{Subject to:}~C_4,~C_5,~C_8,~C_9.
	\end{aligned}
	\end{equation}
\end{center}
\vspace{-4mm}
\begin{center}
	\begin{equation}\label{p2}
	\begin{aligned}
	\textbf{P2}:&\underset{\mathbf{X},\mathbf{P},\mathbf{F}}{\text{min}}~ \sum_{k \in \mathcal{K}} \sum_{u\in \mathcal{U}_k} \lambda_k (T_u(\mathbf{X},\mathbf{P},\mathbf{F})-\Bar{T_k}) \\
	& \text{Subject to:}~C_1-C_3,~C_6,~C_7.
	\end{aligned}
	\end{equation}
\end{center}
In problem (\ref{p1}), both the objective function and constraint set are affine with respect to the variable $\mathbf{Y}$. As such, it can be  solved using standard optimization tools such as CVX toolbox.

The first challenge in (\ref{p2}) is the multiplication of subchannel and power allocation variables in (1) as well as in constraint $C_3$. To tackle this challenge, we first replace all $x_{u,n} p_{u,n}$ terms with $p_{u,n}$ and then add the following constraint to (\ref{p2}):
\vspace{-6mm}
\begin{center}
	\begin{equation}
	\begin{aligned}
	{C_{3,1}:0 \leq p_{u,n}\leq x_{u,n} P_{max,u}}
	\end{aligned}
	\end{equation}
\end{center}
\vspace{-1mm}
By using the above modification, users' transmit power would be automatically set to zero over subchannels they do not own. By adding this constraint, data rate function $R_u(\mathbf{X},\mathbf{P})$ would become a function of trasmit power only ($R_u(\mathbf{P})$). This step solves the variable multiplication issue, however discrete subchannel allocation variable is still challenging. To deal with this issue we replace $C_2$ with the following two constraints:
\vspace{-1mm}
\begin{align}
	C_{2,1}: 0 \leq x_{u,n} \leq 1,~~ C_{2,2}: x_{u,n}-x_{u,n}^2 \leq 0.
	\vspace{-1mm}
\end{align}

{\em Remark 2:} Although we relax $x_{u,n}$ to a continuous variable in $C_{2,1}$, since the only two values in [0,1] that fit $C_{2,2}$ are 0 and 1, the binary nature of this variable would be preserved.

 The fractional form of users' delay, $T_u$, is the next issue we focus on. After offloading decision is obtained through solving subproblem $\mathbf{P1}$, edge computation delay, $T^{\mathrm{comp}}_{u}(\mathbf{F},\mathbf{Y})$, in the objective function of  $\mathbf{P2}$ would turn into a convex function  and hand-off delay would be a constant. This leaves us with the summation of users' transmission delay, whose non-convexity can be easily  proved. 
 \begin{lemm} Using tools from fractional programming, problem (\ref{p2}) can be restated as:
 \vspace{-8mm}
\begin{center}
	\begin{equation}\label{p3}
\begin{aligned}
\underset{\mathbf{X},\mathbf{P},\mathbf{F}}{\text{min}}~ &T(\mathbf{P},\mathbf{F}) =\sum_{k \in \mathcal{K}} \sum_{u\in \mathcal{U}_k} \lambda_k \Big[{y_{u,j} L_u}\frac{1}{{2R_u(\mathbf{P})}^2}\\&+\hspace{-4mm}\sum_{j \in \mathcal{M},j\ne m_u}\hspace{-3mm}y_{u,j}T^{\mathrm{ho}}_{m_u,j}+\sum_{j \in \mathcal{M}}\frac{y_{u,j} L_u C_u}{f_{u,j}}-\Bar{T_k}\Big] \\
	& \text{Subject to:}~C_1-C_3,~C_6,~C_7.
	\end{aligned}
	\end{equation}
\end{center}
\end{lemm}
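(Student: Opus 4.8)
The plan is to split the objective of (\ref{p2}), once \textbf{P1} has fixed the offloading matrix $\mathbf{Y}$, into a constant part, a convex part, and a single non-convex term, and then to eliminate that term with a quadratic-transform identity from fractional programming. First I would note that with $\mathbf{Y}$ fixed: (i) the local-computation delay $T_u^{\mathrm{L}}(\mathbf{Y})$ and the aggregate hand-off delay $\sum_{j\neq m_u}y_{u,j}T^{\mathrm{ho}}_{m_u,j}$ are constants, so they may be carried additively (or dropped) without changing the minimizer; (ii) the edge-computation delay $\sum_{j\in\mathcal{M}}\frac{y_{u,j}L_uC_u}{f_{u,j}}$ is a sum of terms of the form $\mathrm{const}/f_{u,j}$, hence convex on $\{f_{u,j}>0\}$; and (iii) the only term left is the communication delay $T^{\mathrm{comm}}_{u,m_u}=\frac{(\sum_{j\in\mathcal{M}}y_{u,j})L_u}{R_u(\mathbf{P})}$, a ratio whose denominator is non-concave owing to the interference $I_{u,n}$ inside $R_u(\mathbf{P})$ --- this is the sole obstruction to convexity of (\ref{p2}).

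Next I would introduce one auxiliary variable $\gamma_u$ per user and apply the identity $\frac{c_u}{R_u}=\min_{\gamma_u>0}\big(\tfrac{c_u\gamma_u}{2}+\tfrac{c_u}{2\gamma_u R_u^2}\big)$, valid for $c_u:=(\sum_{j\in\mathcal{M}}y_{u,j})L_u\ge 0$ and $R_u(\mathbf{P})>0$, whose inner minimum is attained uniquely at $\gamma_u^{\star}=1/R_u$ (the bracket is strictly convex in $\gamma_u$, so the stationarity condition suffices; this is just the Fenchel--Young/AM--GM bound $\sqrt{ab}\le\tfrac{\gamma a}{2}+\tfrac{b}{2\gamma}$ with $a=c_u$, $b=c_u/R_u^2$). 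Substituting this for every communication-delay term recasts (\ref{p2}) as the joint minimization over $(\mathbf{X},\mathbf{P},\mathbf{F},\{\gamma_u\})$ of $\sum_{k\in\mathcal{K}}\sum_{u\in\mathcal{U}_k}\lambda_k\big[\tfrac{c_u\gamma_u}{2}+\tfrac{c_u}{2\gamma_u R_u(\mathbf{P})^2}+\sum_{j\neq m_u}y_{u,j}T^{\mathrm{ho}}_{m_u,j}+\sum_{j\in\mathcal{M}}\tfrac{y_{u,j}L_uC_u}{f_{u,j}}-\bar{T}_k\big]$ over the unchanged feasible set $C_1$--$C_3$ (with $C_2$ replaced by $C_{2,1},C_{2,2}$ and $C_3$ complemented by $C_{3,1}$), $C_6$, $C_7$; in the alternating scheme $\gamma_u$ is updated in closed form to $1/R_u$, whereupon the linear $\tfrac{c_u\gamma_u}{2}$ terms are constants and the residual dependence on $(\mathbf{P},\mathbf{F})$ is exactly the $\tfrac{1}{2R_u^2}$-plus-computation structure of (\ref{p3}).

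Finally I would establish the equivalence cleanly: for any feasible $(\mathbf{X},\mathbf{P},\mathbf{F})$, minimizing the recast objective over $\{\gamma_u>0\}$ returns, term by term through the identity, exactly $\sum_{k\in\mathcal{K}}\sum_{u\in\mathcal{U}_k}\lambda_k(T_u-\bar{T}_k)$; hence (\ref{p2}) and the recast problem have the same optimal value, and deleting the $\gamma_u$ coordinates sends optimal solutions to optimal solutions. The main obstacle I anticipate is precisely this equivalence and the well-posedness it needs: the identity and the update $\gamma_u^{\star}=1/R_u$ presuppose $R_u(\mathbf{P})>0$ for every user with $c_u>0$. I would clear this by observing that any feasible point where an offloading user has $R_u(\mathbf{P})=0$ gives infinite delay and is never optimal, so it can be excluded at the outset; on the complement, $C_1$ together with $C_{3,1}$ force an offloading user to transmit with positive power on at least one subchannel, giving $R_u(\mathbf{P})>0$. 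The leftover items are routine: joint continuity of the recast objective, and attainment of the $\gamma_u$-minimization on $(0,\infty)$, both immediate from strict convexity and coercivity in $\gamma_u$.
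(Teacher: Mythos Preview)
Your proposal is correct and follows essentially the same route as the paper: both apply the quadratic-transform identity from fractional programming to the communication-delay ratio $c_u/R_u$, introducing one auxiliary scalar per user (your $\gamma_u$, the paper's $t_i$ with $B_i=1$, $A_i=R_u$; these differ only by the reparameterization $\gamma_u=2t_i$) so that the dependence on $R_u$ becomes $1/R_u^2$, while the hand-off and edge-computation terms are carried through unchanged. Your treatment is more careful than the paper's terse citation-and-substitute argument---you explicitly verify the inner minimizer, argue equivalence of optimal values, and handle the $R_u(\mathbf{P})>0$ well-posedness issue---but the underlying idea is identical.
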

\begin{proof}
~An optimization problem with the form 
$\underset{x\in C_x}{\min} \sum_{i=1}^I \frac{B_i({X})}{A_i({X})}$,
can be restated equivalently  as \cite{wei}:
\vspace{-3mm}
\begin{equation}\label{1}
\underset{x\in C_x,t\in \mathbb{R}^+ }{\min} \sum_{i=1}^I t_i {B_i(X)}^2+\sum_{i=1}^I \frac{1}{4t_i}\frac{1}{{A_i(X)}^2},
\end{equation}
where $t_i=\frac{1}{2B_i(X)A_i(X)}$. 
Using \eqref{1} and by setting $B_i= 1$ and $A_i={R_u(\mathbf{P})}$, we restate problem (\ref{p2}) as given in Lemma~1.
\vspace{-2mm}
\end{proof}

Due to the presence of interference, $R_i(\mathbf{P})$ is still a non-convex function of transmit power. 
\begin{lemm} We can obtain an equal but convex representation of communication delay function by restating the rate  as:
 \vspace{-6mm}
\begin{center}
	\begin{equation}\label{rate_app}
	\small
\begin{aligned}
\hat{r}_{u,n}(\mathbf{P},z_{u,n})=\hspace{-1mm}\log_2\Big(\hspace{-0.5mm}1+2z_{u,n}\sqrt{h_{u,m_u,n}p_{u,n}}-z^2_{u,n}(I_{u,n}+\sigma^2)\Big),
	\end{aligned}
	\end{equation}
\end{center}
\vspace{-1mm}
\end{lemm}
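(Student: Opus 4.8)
The plan is to read \eqref{rate_app} as an instance of the \emph{quadratic transform} for a ratio and then to verify the two claims contained in the lemma: that introducing the auxiliary variables $z_{u,n}$ changes neither the optimal value nor the minimizers of \eqref{p3}, and that, with those auxiliaries held fixed, the communication-delay contribution to the objective of \eqref{p3} is convex in the power variables. The elementary fact I would start from is that for every $a\ge 0$ and $b>0$,
\begin{equation}\label{quadtr}
2z\sqrt{a}-z^{2}b=\frac{a}{b}-b\Big(z-\frac{\sqrt{a}}{b}\Big)^{2}\le\frac{a}{b},\qquad z\ge 0,
\end{equation}
with equality exactly at $z=\sqrt{a}/b$, so that $a/b=\max_{z\ge 0}\big(2z\sqrt{a}-z^{2}b\big)$. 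Applying this with $a=h_{u,m_u,n}\,p_{u,n}$ and $b=I_{u,n}+\sigma^{2}$, where after the substitution $I_{u,n}=\sum_{j\neq m_u}\sum_{u'\in\hat{\mathcal{U}}_{j}}p_{u',n}h_{u',m_u,n}$, the per-subchannel SINR $\gamma_{u,n}=h_{u,m_u,n}p_{u,n}/(I_{u,n}+\sigma^{2})$ equals $\max_{z_{u,n}\ge 0}\big(2z_{u,n}\sqrt{h_{u,m_u,n}p_{u,n}}-z_{u,n}^{2}(I_{u,n}+\sigma^{2})\big)$; since $\log_{2}(1+\cdot)$ is increasing, $r_{u,n}=B\max_{z_{u,n}\ge 0}\hat r_{u,n}(\mathbf{P},z_{u,n})$, and because the $z_{u,n}$ are decoupled over $u$ and $n$, $R_{u}(\mathbf{P})=B\max_{\{z_{u,n}\}_{n}}\sum_{n}\hat r_{u,n}(\mathbf{P},z_{u,n})$.

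For the ``equal'' part I would observe that, once $\mathbf{Y}$ is fixed, the rate enters \eqref{p3} only through the term $y_{u,j}L_{u}/(2R_{u}(\mathbf{P})^{2})$ and through no constraint, the relevant power constraint being the affine $C_{3,1}$. Since $t\mapsto 1/(2t^{2})$ is strictly decreasing on $(0,\infty)$, substituting $R_{u}=B\max_{\mathbf{z}}\sum_{n}\hat r_{u,n}$ turns that term into $\min_{\mathbf{z}}\,y_{u,j}L_{u}/\big(2B^{2}(\sum_{n}\hat r_{u,n})^{2}\big)$, so the inner minimization over the $z_{u,n}$ can be merged with the outer minimization over $(\mathbf{X},\mathbf{P},\mathbf{F})$ without altering the optimal value; at any optimizer one recovers $z_{u,n}^{\star}=\sqrt{h_{u,m_u,n}p_{u,n}}/(I_{u,n}+\sigma^{2})$ from \eqref{quadtr}. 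The reformulated problem is thus identical to \eqref{p3} except that $r_{u,n}$ is replaced by $\hat r_{u,n}$ and the $z_{u,n}$ are added to the decision variables.

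For the ``convex'' part I would fix $z_{u,n}\ge 0$. Then $p_{u,n}\mapsto\sqrt{h_{u,m_u,n}p_{u,n}}$ is concave and $I_{u,n}$ is a nonnegative linear combination of powers, hence affine in $\mathbf{P}$, so the argument of $\log_{2}$ in \eqref{rate_app} is concave in $\mathbf{P}$; composing with the nondecreasing concave map $\log_{2}(1+\cdot)$ keeps $\hat r_{u,n}(\cdot,z_{u,n})$ concave on the set where that argument is positive, a set that contains every optimizer since there it equals $1+\gamma_{u,n}>1$. Hence $\sum_{n}\hat r_{u,n}$ is concave and positive there, and $1/(2(\sum_{n}\hat r_{u,n})^{2})$, being a convex nonincreasing function of a concave positive quantity, is convex in $\mathbf{P}$; the terms $y_{u,j}T^{\mathrm{ho}}_{m_u,j}$ are constants, $y_{u,j}L_{u}C_{u}/f_{u,j}$ is convex in $f_{u,j}>0$, and $C_{1}$, $C_{2,1}$, $C_{3,1}$, $C_{6}$, $C_{7}$ are affine, so \eqref{p3} with $\mathbf{z}$ held fixed is convex apart from the isolated non-convex term $C_{2,2}$, which the Augmented Lagrangian step treats separately. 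The overall procedure then alternates the closed-form update $z_{u,n}\leftarrow\sqrt{h_{u,m_u,n}p_{u,n}}/(I_{u,n}+\sigma^{2})$ with this convex subproblem.

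The main obstacle I anticipate is justifying that the inner maximization over $\mathbf{z}$ may legitimately be pulled out of, and merged into, the minimization: this rests simultaneously on the objective being decreasing in each $R_{u}$, on the rate appearing in no constraint, and on the separability of the $z_{u,n}$ across users and subchannels. A secondary subtlety is the domain of the logarithm, since joint concavity in $(\mathbf{P},\mathbf{z})$ fails and concavity is available only block-wise in $\mathbf{P}$, on the region where the log-argument stays positive; this is the only point where some care about feasible/attainable values is genuinely needed.
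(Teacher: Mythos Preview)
Your proposal is correct and follows essentially the same route as the paper: both rely on the quadratic transform for ratios (the paper simply cites \cite{rate}, whereas you derive the identity \eqref{quadtr} and its consequences explicitly). Your treatment is strictly more complete---you spell out why the inner $\max$ over $z$ may be absorbed into the outer $\min$, why $\hat r_{u,n}$ is concave in $\mathbf{P}$ for fixed $z$, and where the positivity of the log-argument is needed---all of which the paper leaves implicit in its one-line appeal to the reference.
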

\begin{proof}
~ As mathematically proven in \cite{rate} and since in \textbf{Lemma 1}, we set $A_i=R_u(\textbf{P})$, and $R_u(\textbf{P})= \sum_{n\in \mathcal{N}} r_{u,n}$, $r_{u,n}$ can be equally restated as (\ref{rate_app}). This modification, not only makes $r_{u,n}$ a concave function of $\mathbf{P}$, $\frac{1}{R_u(\textbf{P})^2}$ would also become a convex function.
\end{proof}
\vspace{-2mm}
\normalsize
In (\ref{rate_app}), $z_{u,n}$ is a slack variable that will be updated iteratively. Using \textbf{Lemma 2}, we convexify the complex non-convex function $T_{u,m_u}^{\mathrm{comm}}(\mathbf{P})$, also we redefine $R_u(\mathbf{P}) = \sum_{n \in \mathcal{N}} \hat{r}_{u,n}(\mathbf{P},z_{u,n})$. For optimizing $\mathbf{P}$, $\mathbf{X}$, and $\mathbf{F}$ we adopt ALM. For a given $z_{u,n}$, the augmented Lagrangian function is given in (\ref{AM}).
\begin{figure*}
\begin{align}\label{AM}
&\min L(\mathbf{X},\mathbf{P},\mathbf{F},\mathbf{Z},\mathbf{\Gamma})= T(\mathbf{X},\mathbf{P},\mathbf{F},\mathbf{Z})+\frac{1}{2\Psi}\Bigg[\Bigg(\Bigg[\sum_{u \in \mathcal{U}}\theta_u+\Psi(\sum_{n \in \mathcal{N}} p_{u,n}\nonumber-P_{max,u}) \Bigg]^+\Bigg)^2-\sum_{u \in \mathcal{U}}\theta^{2}_u\nonumber\\&+\Bigg(\Bigg[\sum_{k \in \mathcal{K}} \delta_k+\Psi\big(\sum_{u \in \mathcal{U}_k} \sum_{j\in \mathcal{M}} f_{u,j}-\beta_k^E S^E\big)\Bigg]^+\Bigg)^2-\sum_{k \in \mathcal{K}} \delta_k^2+
\Bigg(\Bigg[\sum_{n \in \mathcal{N}} \sum_{j \in \mathcal{M}} \phi_{n,j}+\Psi\big(\sum_{u \in \mathcal{U}}x_{u,n}-1\big)\Bigg]^+\Bigg)^2\nonumber\\
&-\sum_{n \in \mathcal{N}} \sum_{j\in \mathcal{J}} \phi_{n,j}^2+\Big(\Bigg[\sum_{n \in \mathcal{N}} \sum_{j \in \mathcal{M}} \sum_{u \in \mathcal{U}} \xi_{u,n,j}+\Psi \bigg(\sum_{n \in \mathcal{N}} \sum_{j \in \mathcal{M}} \sum_{u \in \hat{\mathcal{U}}_j} x_{u,n}-x_{u,n}^2\bigg)\Bigg]^+\Big)^2-\sum_{n \in \mathcal{N}} \sum_{j \in \mathcal{M}} \sum_{u \in \mathcal{U}} \xi_{u,n,j}^2\nonumber\\&+\Bigg(\Bigg[\sum_{u \in \mathcal{U}}\sum_{n \in \mathcal{N}}\Xi_{u,n}+\Psi(\sum_{u \in \mathcal{U}}\sum_{n \in \mathcal{N}} p_{u,n}-x_{u,n}P_{max,u}) \Bigg]^+\Bigg)^2-\sum_{u \in \mathcal{U}}\sum_{n \in \mathcal{N}}\Xi^{2}_{u,n}\Bigg].
\end{align}	
\hrule
\end{figure*}

In the augmented Lagrangian function, $\Psi$ is a positive constant that plays the role of an adjustable penalty coefficient and $\Gamma$ is the vector of all Lagrangian multipliers $\Theta$, $\Delta$, $\Phi$, $\xi$, and $\Xi$. Solving problem {(\ref{p2}) or, equivalently (12)} can be done in three steps. In the first step, we consider Lagrangian multipliers to be fixed and minimize $L(\mathbf{X},\mathbf{P},\mathbf{F},\mathbf{Z})$ given in (\ref{AM}). In the second step, Lagrangian multipliers would be updated as:
\vspace{-3mm}
\begin{align}
\theta_u^{t+1}=\Bigg[\theta_u^{t}+\Psi\Bigg(\sum_{n \in \mathcal{N}} p_{u,n}-P_{max,u}\Bigg)\Bigg]^+,
\end{align}
\vspace{-4.5mm}
\begin{align}
\delta_k^{t+1}=\Bigg[\delta_k^{t}+\Psi\big(\sum_{u \in \mathcal{U}_k} \sum_{j\in \mathcal{M}} f_{u,j}-\beta_k^E S^E\big)\Bigg]^+,
\end{align}
\vspace{-4.5mm}
\begin{align}
\phi^{t+1}_{n,j}=\Bigg[\phi^{t}_{n,j}+\Psi\big(\sum_{u \in \mathcal{U}}x_{u,n}-1\big)\Bigg]^+,
\end{align}
\vspace{-4.5mm}
\begin{align}
\xi^{t+1}_{u,n,j}\hspace{-2mm}=\hspace{-1.5mm}\Bigg[\hspace{-1mm}\sum_{n \in \mathcal{N}} \sum_{j \in \mathcal{M}} \sum_{u \in \mathcal{U}}\hspace{-0.8mm}\xi^{t}_{u,n,j}\hspace{-1.5mm}+\Psi \bigg(\hspace{-0.5mm}\sum_{n \in \mathcal{N}} \sum_{j \in \mathcal{M}} \hspace{-0.6mm}\sum_{u \in \hat{\mathcal{U}}_j}\hspace{-1.5mm}x_{u,n}-x_{u,n}^2\bigg)\Bigg]^+
\vspace{-4.5mm}
\end{align}
\vspace{-4.5mm}
\begin{align}
\Xi_{u,n}^{t+1}=\hspace{-1.5mm}\Bigg[\sum_{u \in \mathcal{U}}\sum_{n \in \mathcal{N}}\Xi^{t}_{u,n}+\Psi(\sum_{u \in \mathcal{U}}\sum_{n \in \mathcal{N}} p_{u,n}\hspace{-0.5mm}-x_{u,n}P_{max,u}) \Bigg]^+.
\end{align}
The third step is executed after a solution is obtained for (\ref{AM}). In this last step, using the values obtained for $\mathbf{P}$ and $\mathbf{X}$, we update slack variable $z_{u,n}$ as
$z_{u,n}=\frac{\sqrt{p^*_{u,n}h_{u,m_u,n}}}{(I_{u,n}+\sigma^2)}$. Our proposed algorithm is given in \textbf{Algorithm 1}.
\vspace{-1mm}
		\begin{algorithm}
		\caption{ Proposed Algorithm}
		\label{euclid}
		\begin{algorithmic}[1]
			\State Obtain the solution of problem (\ref{p1}) and initialize $Z$.
			\State\textbf{Repeat} 
			\State~~Initialize $\Gamma=[ \Theta, \Delta, \Phi, \xi, \Xi]$ with small numbers. 
			\State~~\textbf{Repeat}
			\State~~~~Solve problem (\ref{AM}) considering $\Gamma$ to be fixed,
			\State~~~~Update $\Gamma$ using (16), (17), (18), (19), and (20).
			\State~~\textbf{Until} convergence.
			\State~~~Set $z_{u,n}=\frac{\sqrt{p^*_{u,n}h_{u,m_u,n}}}{(I_{u,n}+\sigma^2)}$ for all users and subchannels.
			\State\textbf{Until} Convergence
		\end{algorithmic}
	\end{algorithm}
\vspace{-2mm}
\section{Computation Complexity Analysis}
Our proposed algorithm is divided into two sub-problems, i.e., (i) offloading decision optimization and (ii)~joint computation and RAN RA. {For the first sub-problem, we use interior point method in CVX whose complexity is in the order of  $O(\log(\frac{C/t_{0}\xi}{\epsilon}))$, where $C$, $t_{0}$, $\xi$, and $\epsilon$ denote the total number of constraints, the initial point for interior point method, the stopping criterion, and  a representation of the accuracy of the method, respectively}. For the second sub-problem based on ALM the order of complexity at each iteration is $\mathcal{O}(KUM)^2$ which is  polynomial \cite{ata}.
\vspace{-4mm}
\section{Simulation Results and Discussions}
\begin{figure*}
\vspace{-2.5mm}
\centering
\begin{minipage}{0.3\linewidth}
\includegraphics[width=6.2cm, height=5.2cm]{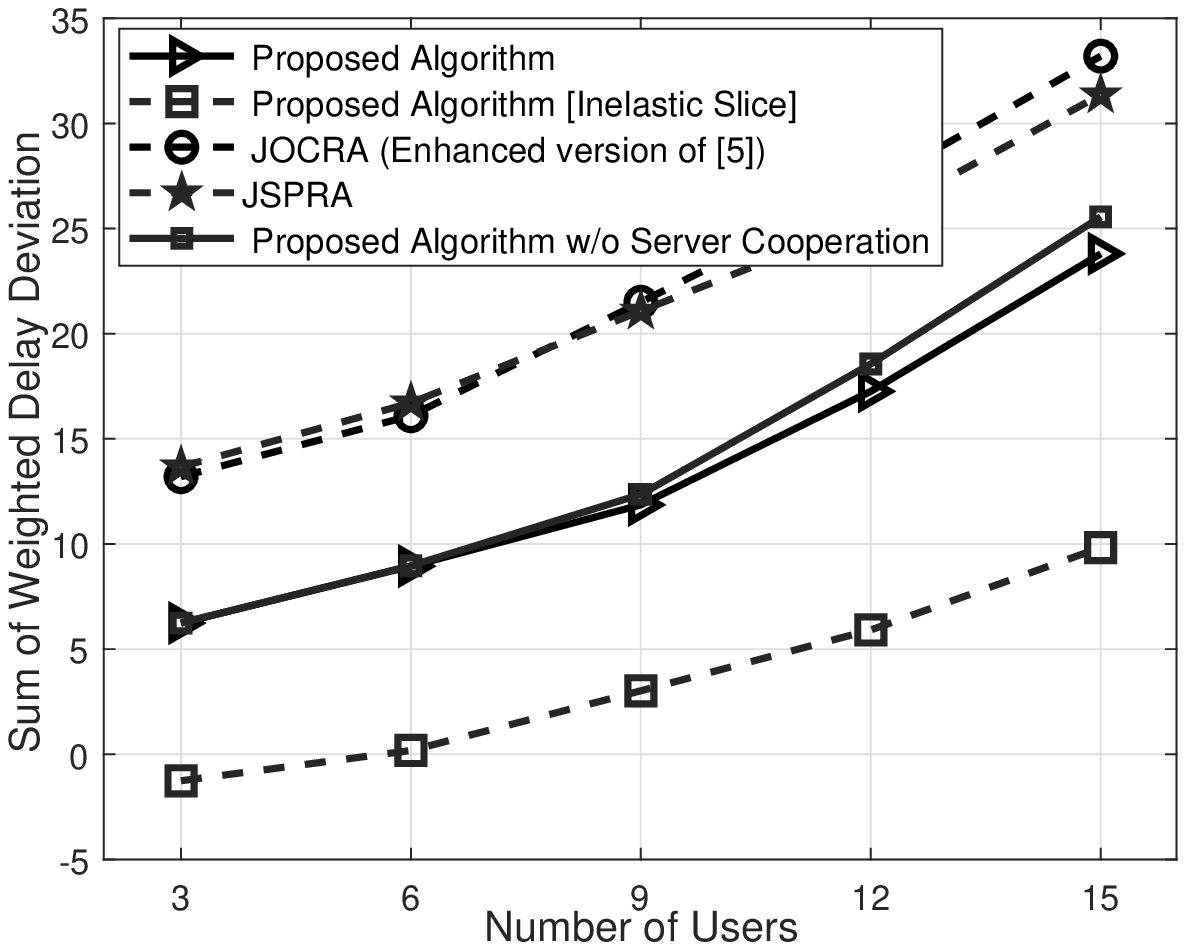}
\caption{Weighted delay deviation Vs. Number of users}
\end{minipage}\hfill
\begin{minipage}{0.3\linewidth}
\includegraphics[width=6.2cm, height=5.2cm]{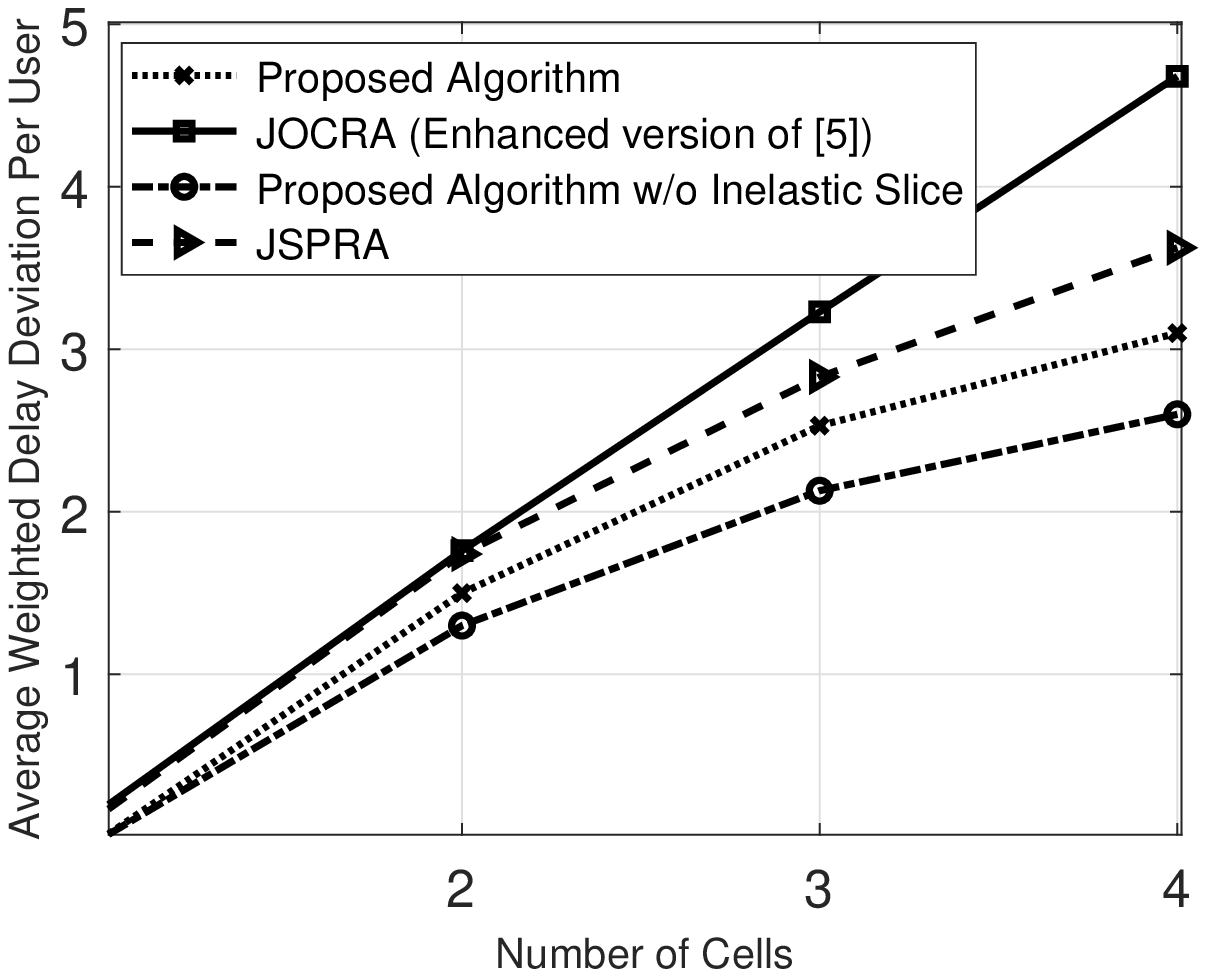}
\caption{Weighted delay deviation Vs. Number of cells}
\end{minipage}\hfill
\begin{minipage}{0.3\linewidth}
\includegraphics[width=6.1cm, height=5.2cm]{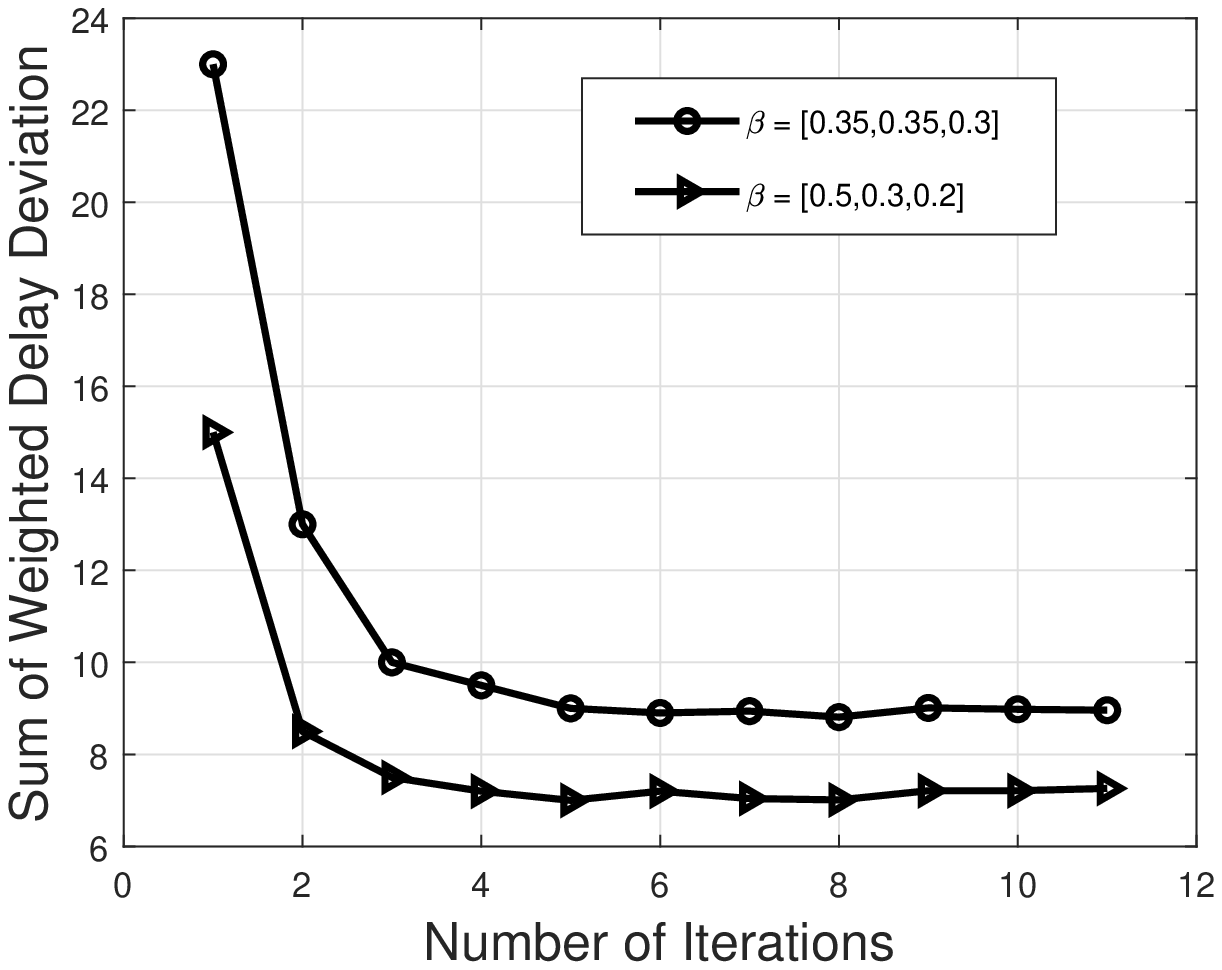}
\caption{Weighted delay deviation Vs. Number of iterations}
\end{minipage}
\label{fig2}
\vspace{-5.5mm}
\end{figure*}

% In this section, we evaluate our proposed algorithm which considers {\em joint offlaoding decision, communication and computation RA with server cooperation} and compare it with the  schemes, i.e., \textbf{i)} { Joint RA with No Cooperation}, \textbf{(ii)} 
% Joint RA; however, subchannels are allocated randomly and maximum transmit power is used,
% \textbf{(iii)} Joint RA; however,  computation resources are equally allocated to users.
% Also, we investigate how different factors such as joint communication and computation RA, inter-cell interference, and cooperation can impact the users' delay. 

We consider a network with two cells each having 6 users and 16 available subchannels, unless stated otherwise. Similar to \cite{b2}, {we consider three slices/services as: \textit{elastic services} with flexible latency constraints, \textit{inelastic services} that require ultra-low latency, and \textit{background services} with low latency requirement}. The weighting parameter $\lambda$ is set to  $[3,2,1]$ for inelastic, elastic, and background services, with 50ms, 100ms, and 5s desired delay threshold, respectively. The value of $L_u$ is 1 MB and the CPU cycle, $C_u$, is randomly chosen from $[1500, 2000, 2500]$. {As a convex problem, initial point does not effect the solution of (8), however, to avoid increasing the complexity, the initial point of the problem (9) is obtained by  checking various values and selecting the best values that minimizes our objective function}.

Fig.~2 depicts the effect of number of users in each cell on the sum of weighted delay deviation at each slice. We have  compared our algorithm  with 1) \textit{Joint Offloading and Computation RA (JOCRA):} where only offloading and computing RA is considered (with interference and server cooperation, this scenario is in fact an improvement on \cite{b3}), 2) \textit{Joint offloading, Subchannel, Power RA (JSPRA):} in which only RAN RA is addressed and computation resource is equally allocated to users, and 3) our proposed scheme without server cooperation. We can clearly observe the significance of joint computation and RAN RA in the delay that users experience. In fact, if we ignore computation RA we would have $58\%$ and if we overlook communication RA we will have $62\%$ increase in network delay deviation on average.  In Fig. 1, the impact of cooperation among cells is also illustrated. At first, when number of users is not too high, there is almost no need for cooperation. However, as the number of users increases, we observe that the effect of cooperation becomes  noteworthy (i.e., $9\%$ reduction on average). {The positive delay deviation occur when network becomes infeasible (i.e., insufficient resources in at least one slice) and satisfying the QoS of high priority services takes precedence in the network. Thus, we can  preserve the QoS of slices by increasing their weight ($\lambda_k$) for prioritization of the slice or the quota of reserved resources ($\beta$ and $\alpha$) to avoid infeasibility. However, such modifications are often a function of the cost SPs are willing to pay.}
% This would also provide a good evaluation point for SPs to see if the resources they have leased are enough to satisfy their users' demands.
% \begin{figure}[h]
% \vspace{-5mm}
% \centering
% \includegraphics[width=8cm, height=5.3cm]{images/Nuser_2.eps}
% \caption{Weighted delay deviation Vs. Number of users}
% \end{figure}
% \begin{figure}[h]
% \centering
% \vspace{-4mm}
% \includegraphics[width=8cm, height=5.35cm]{images/Number_of_cells.eps}
% \caption{Weighted delay deviation Vs. Number of cells}
% \end{figure}
% \begin{figure}[h]
% \centering
% \vspace{-4mm}
% \includegraphics[width=7cm, height=4.5cm]{images/convergence.eps}
% \caption{Weighted delay deviation Vs. Number of iterations}
% \end{figure}

In Fig.~3, we examine how increasing the number of cells impacts the delay of users. We again compare our proposed algorithm with \textit{JOCRA} and \textit{JSPRA}. As the number of users per cell remains constant here, we depict the average delay deviation per user. Increasing the number of cells notably increases the delay of users, however this increase is more significant when communication  RA is overlooked. Because, {while the average amount of resources available for users remains almost the same (since the number of users in each cell is constant), more cells means intensified interference in the network. To deal with the negative effect of this intensified interference, precise RAN RA becomes imperative.} 
% We also illustrate the delay deviation when inelastic services are removed from the system.

{The convergence of our proposed algorithm and the importance of slice resource management is numerically demonstrated in Fig. 4. Here, we observe that: i) our algorithm converges to its final solution after a few iterations, and ii) careful resource reservation plays a significant role in the QoS users of each slice achieve.}
\vspace{-4mm}
\section{Conclusion}
In this work we propose a framework to minimize the delay in cooperative MEC network by optimizing both \textit{RAN and computation} resources and \textit{offloading decisions}, using  tools from fractional programming, convexification of rate function, and ALM. {The problem of routing between edge servers is a venue for future works, especially with wireless backhauling.}
% Analysing the results of this optimization problem can provide a deeper insights to service providers on the adequacy of their leased resources to meet their service quality and  helps them better plan their future investment strategies to whether maintain their current SLA, invest more on leasing resources, or to modify their subscription policy.

\vspace{-3mm}
\bibliographystyle{IEEEtran}
\bibliography{ref}

% Generated by IEEEtran.bst, version: 1.13 (2008/09/30)
\begin{thebibliography}{10}
\providecommand{\url}[1]{#1}
\csname url@samestyle\endcsname
\providecommand{\newblock}{\relax}
\providecommand{\bibinfo}[2]{#2}
\providecommand{\BIBentrySTDinterwordspacing}{\spaceskip=0pt\relax}
\providecommand{\BIBentryALTinterwordstretchfactor}{4}
\providecommand{\BIBentryALTinterwordspacing}{\spaceskip=\fontdimen2\font plus
\BIBentryALTinterwordstretchfactor\fontdimen3\font minus
  \fontdimen4\font\relax}
\providecommand{\BIBforeignlanguage}[2]{{%
\expandafter\ifx\csname l@#1\endcsname\relax
\typeout{** WARNING: IEEEtran.bst: No hyphenation pattern has been}%
\typeout{** loaded for the language `#1'. Using the pattern for}%
\typeout{** the default language instead.}%
\else
\language=\csname l@#1\endcsname
\fi
#2}}
\providecommand{\BIBdecl}{\relax}
\BIBdecl

\bibitem{b1}
P.~{Zhao}, H.~{Tian}, S.~{Fan}, and A.~{Paulraj}, ``Information prediction and
  dynamic programming-based {RAN} slicing for mobile edge computing,''
  \emph{IEEE Wireless Commun. Letters}, vol.~7, no.~4, pp. 614--617, 2018.

\bibitem{b4}
E.~{El Haber}, T.~M. {Nguyen}, and C.~{Assi}, ``Joint optimization of
  computational cost and devices energy for task offloading in multi-tier
  edge-clouds,'' \emph{IEEE Trans. on Commun.}, vol.~67, no.~5, pp. 3407--3421,
  2019.

\bibitem{b5}
Y.~{Wang}, X.~{Tao}, X.~{Zhang}, P.~{Zhang}, and Y.~T. {Hou}, ``Cooperative
  task offloading in three-tier mobile computing networks: An {ADMM}
  framework,'' \emph{IEEE Trans. on Vehicular Technology}, vol.~68, no.~3, pp.
  2763--2776, 2019.

\bibitem{b14}
J.~{Zhang}, W.~{Xia}, F.~{Yan}, and L.~{Shen}, ``Joint computation offloading
  and resource allocation optimization in heterogeneous networks with mobile
  edge computing,'' \emph{IEEE Access}, vol.~6, pp. 19\,324--19\,337, 2018.

\bibitem{b3}
B.~{Xiang}, J.~{Elias}, F.~{Martignon}, and E.~{Di Nitto}, ``Joint network
  slicing and mobile edge computing in {5G} networks,'' in \emph{IEEE
  International Conference on Commun. (ICC)}, 2019, pp. 1--7.

\bibitem{b9}
Y.~{Xiao} and M.~{Krunz}, ``Dynamic network slicing for scalable fog computing
  systems with energy harvesting,'' \emph{IEEE Journal on Selected Areas in
  Commun.}, vol.~36, no.~12, pp. 2640--2654, 2018.

\bibitem{b10}
H.~{Chien}, Y.~{Lin}, C.~{Lai}, and C.~{Wang}, ``End-to-end slicing with
  optimized communication and computing resource allocation in multi-tenant
  {5G} systems,'' \emph{IEEE Trans. on Vehicular Tech.}, vol.~69, no.~2, pp.
  2079--2091, 2020.

\bibitem{b2}
.~U. {Akgül}, I.~{Malanchini}, and A.~{Capone}, ``Dynamic resource trading in
  sliced mobile networks,'' \emph{IEEE Trans. on Network and Service
  Management}, vol.~16, no.~1, pp. 220--233, 2019.

\bibitem{last}
J.~{Feng}, Q.~{Pei}, F.~R. {Yu}, X.~{Chu}, J.~{Du}, and L.~{Zhu}, ``Dynamic
  network slicing and resource allocation in mobile edge computing systems,''
  \emph{IEEE Trans. on Vehicular Tech.}, vol.~69, no.~7, pp. 7863--7878, 2020.

\bibitem{wei}
K.~{Shen} and W.~{Yu}, ``Fractional programming for communication
  systems—part i: Power control and beamforming,'' \emph{IEEE Trans. on
  Signal Processing}, vol.~66, no.~10, pp. 2616--2630, 2018.

\bibitem{rate}
Z.~{Wang}, L.~{Vandendorpe}, M.~{Ashraf}, Y.~{Mou}, and N.~{Janatian},
  ``Minimization of sum inverse energy efficiency for multiple base station
  systems,'' in \emph{2020 IEEE Wireless Commun. and Networking Conference
  (WCNC)}, 2020, pp. 1--7.

\bibitem{ata}
A.~{Khalili}, S.~{Akhlaghi}, H.~{Tabassum}, and D.~W.~K. {Ng}, ``Joint user
  association and resource allocation in the uplink of heterogeneous
  networks,'' \emph{IEEE Wireless Commun. Letters}, vol.~9, no.~6, pp.
  804--808, 2020.

\end{thebibliography}
\end{document}